\documentclass[letterpaper, 10 pt, conference]{ieeeconf}  

\IEEEoverridecommandlockouts
\usepackage{amsmath,amsfonts,amssymb}
\usepackage{mathtools}
\usepackage{graphicx}
\usepackage{epsfig}
\usepackage{xcolor}
\usepackage{accents}
\usepackage{cite}
\usepackage{comment}
\usepackage{algorithm}
\usepackage{algpseudocode}

\DeclareMathOperator{\atanh}{atanh}

\newtheorem{theorem}{Theorem}
\newtheorem{rem}{Remark}

\newtheorem{assumption}{Assumption}

\title{\LARGE \bf
Learning Input-Constrained Funnel Controllers\\ from State Trajectory Data
}

\author{{Panagiotis S. Trakas, Omid Mirzaeedodangeh, and Lars Lindemann}
\thanks{The authors are with the Automatic Control Laboratory, ETH Z\"urich, Switzerland. E-mails: {\tt\small ptrakas@ethz.ch, omirzaeedoda@ethz.ch, llindemann@ethz.ch.}}
}

\begin{document}
\bstctlcite{IEEE:BSTcontrol}
\maketitle
\thispagestyle{empty}
\pagestyle{empty}
\begin{abstract}
Designing feedback controllers that satisfy predefined performance specifications while enforcing hard input constraints is a challenging task. Our work is motivated by the idea that state trajectory data, e.g., obtained from an expert controller, often implicitly encode feasible performance attributes and input limitations. We propose an optimization-based framework that uses state trajectory data to jointly learn: (i) a performance funnel that mimics the transient and steady-state behavior encoded within the observed trajectories, and (ii) a feedback controller that enforces the learned performance specifications under hard input constraints. Unlike imitation learning methods, the proposed approach does not rely on control input data and does not reconstruct an expert policy. Instead, it synthesizes a prescribed performance controller by combining nominal model compensation with a learned state-dependent feedback gain. The resulting synthesis problem is nonconvex, for which we develop a feasibility-driven active-set synthesis procedure. Finally, we establish two complementary guarantees: a semi-global conservative actuator-authority-based certificate for prescribed performance and input satisfaction, and a local data-driven certificate ensuring these properties near sufficiently dense demonstrated trajectories.
\end{abstract}

\section{Introduction}
Prescribed performance control (PPC)~\cite{ppc,bechlaut} and the closely related funnel control (FC) methodology~\cite{bergerA} enforce transient and steady-state specifications by confining the system state or tracking error to a time-varying performance funnel. Its shape can encode requirements such as convergence rate, overshoot, and steady-state accuracy. Under hard actuator limits, however, selecting a feasible funnel \emph{a priori} is difficult: aggressive specifications may exceed the available control authority, whereas conservative ones sacrifice performance. Moreover, model-agnostic designs often rely on high feedback gains, potentially causing excessive control effort, actuator saturation, and instability through internal-signal amplification.

Actuator limitations in PPC and funnel control have recently attracted considerable research attention. In~\cite{paper1}, conservative conditions based on bounds on the system dynamics are used to manually tune the funnel and controller gains. Other approaches modify the funnel online~\cite{smc,ratesiso,bergerArch}, adapt the reference signal~\cite{fotiadis}, or introduce switching mechanisms~\cite{autom}. These methods typically preserve closed-loop stability by relaxing the original performance constraints during saturation and therefore do not guarantee them under hard input bounds. Related ideas arise in safety-critical control through control barrier functions (CBFs)~\cite{ames}, including funnel-based formulations~\cite{lanza,verginis,namerikawa}. Such approaches generally compute the control input through online optimization rather than a closed-form feedback law; moreover,~\cite{lanza,namerikawa} do not account for input constraints.

Learning-based CBF methods either learn safety certificates from expert demonstrations, often using state-input pairs~\cite{cbfl}, or synthesize neural Lyapunov/barrier certificates and controllers from model-based state samples~\cite{jin2020neural,dawson2022safe}. These approaches
enforce predefined safety and stability objectives, but do not learn
time-varying transient specifications, such as convergence rate or
overshoot, from demonstrations. In many applications, only
state trajectories are available, as in teleoperation or motion-capture
demonstrations where the corresponding commands or interaction forces
are not recorded. Such trajectories capture desirable behavior, but do
not establish whether it can be reproduced under the imposed actuator
limits. This motivates the joint synthesis of performance specifications
and a controller that can enforce them from state-only data. To the best
of our knowledge, this problem has not been addressed.

In this work, we jointly learn a performance funnel and a PPC controller from state-only trajectories. The funnel captures representative transient and steady-state behavior, whereas the controller combines nominal model compensation with a learned state-dependent gain to preserve funnel invariance under hard input constraints. Unlike methods that learn from state--input pairs~\cite{cbfl} or
estimate missing expert inputs to construct an imitation
policy~\cite{torabi2018behavioral}, our approach requires no demonstrated
input data and does not reconstruct the expert policy. We address the
resulting nonconvex synthesis problem using a feasibility-driven
procedure that iteratively computes the funnel and controller parameters
and accepts a candidate pair only after verifying funnel invariance under
actuator saturation.

The main contributions are:

\begin{itemize}
    \item A data-driven framework that jointly learns a time-varying performance funnel and an input-constrained PPC controller from state-only trajectories, with the funnel shaped by both demonstrated behavior and actuator feasibility.
    \item A feasibility-driven active-set synthesis procedure that handles the resulting nonconvex problem while explicitly accounting for input saturation.
    \item Two complementary guarantees for prescribed performance and input satisfaction: a conservative semi-global certificate based on actuator authority and a local data-driven certificate near sufficiently dense demonstrations.
\end{itemize}
\subsection{Notation}
For vectors $a,b\in\mathbb R^n$, $a\odot b$ and $a\oslash b$ denote
elementwise product and division, respectively. The vector $\mathbf{1}$ denotes a vector of ones of appropriate dimension. For a diagonal matrix
$\mathrm{diag}(v)$, $v$ denotes its diagonal. For $x\in\mathbb R^n$,
$\|x\|$ denotes the Euclidean norm. The saturation operator
$\mathrm{sat}_{\bar u}:\mathbb R^n\to\mathbb R^n$ is defined
componentwise as $[\mathrm{sat}_{\bar u}(v)]_i
= \mathrm{sign}(v_i)\min\{|v_i|,\bar u_i\},~i=1,\dots,n$.

\setlength{\parskip}{0pt}
\setlength{\textfloatsep}{4pt}
\setlength{\floatsep}{4pt}
\setlength{\intextsep}{4pt}
\setlength{\abovedisplayskip}{4pt}
\setlength{\belowdisplayskip}{4pt}
\section{Problem Formulation and Preliminaries}\label{prob}
In this section we introduce the system model, the available demonstration
data, and some preliminaries on the PPC framework used in the sequel.
Then we formally pose the main objective of this work.

\subsection{System model and demonstration data}

Consider the nonlinear control-affine system:
\begin{equation}
\dot x(t)= f(x(t)) + G(x(t))u(t,x)
\label{eq:plant}
\end{equation}
where $x(t) \in \mathbb{R}^n$ denotes the system state, $u(t,x) \in \mathbb{R}^n$ is the control input, and $f:\mathbb{R}^n \to \mathbb{R}^n$ and $G:\mathbb{R}^n \to \mathbb{R}^{n \times n}$ are locally Lipschitz continuous functions. The control input is subject to hard componentwise input constraints $|u_i(t,x)| \le \bar{u}_i$ for all $t \geq 0$, $i=1,\dots,n$, where $\bar{u}_i > 0$ are known actuator limits. Let us also define $\bar{u} := [\bar{u}_1, \dots, \bar{u}_n]^\top$.
\paragraph{State Trajectory Dataset}
We are given access to $N$ observed state trajectories $\{x^{(j)}(t)\}_{j=1}^N$, which correspond to solutions of \eqref{eq:plant} under an unknown feedback controller. These trajectories exhibit desirable performance and may arise, for example, from a human expert or an MPC controller. Our goal is to learn a closed-form feedback law that reproduces similar behavior using only state observations, without access to the corresponding control inputs. The trajectories $\{x^{(j)}(t)\}_{j=1}^N$ are collected over the time interval $[0,T]$ and sampled at discrete time instants:
\begin{equation*}
\mathcal T_{\rm grid}=\{t_k\}_{k=1}^{N_t}\subset[0,T], \quad 0 = t_1 < t_2 < \cdots < t_{N_t} = T.
\end{equation*}
The available dataset consists solely of state observations:
\begin{equation}
\mathcal{D} := \{x^{(j)}(t_k)\;:\; j=1,\dots,N,\;k=1,\dots,N_t\}.
\label{eq:dataset}
\end{equation}
In this work, we adopt a fixed time grid for clarity of presentation and to facilitate the formulation of the optimization problem. The proposed framework can be extended to non-uniform or asynchronous sampling schemes without fundamental modifications.
For notational convenience, define $x_k^{(j)} := x^{(j)}(t_k)$ for all $j=1,\dots,N,\; k=1,\dots,N_t.$
\begin{rem}
The observed trajectories encode desirable transient behavior through their temporal evolution. Therefore, meaningful funnel learning requires sufficiently rich state-space and temporal coverage. The demonstrations need not have been generated under the same actuator limits as those imposed in this paper; however, demonstrations that are compatible with the imposed limits make the synthesis problem more likely to be feasible.
\end{rem}
\subsection{Preliminaries on Prescribed Performance Control} \label{PPC}
PPC enforces transient and steady-state specifications by constraining the system state within a time-varying performance funnel. For each state component $x_i(t)$, we introduce a continuously differentiable performance function $\rho_i:[0,T]\to\mathbb{R}^\star_{+}$ and require $|x_i(t)|<\rho_i(t), ~\forall t\in[0,T], ~ i=1,\dots,n$.
The admissible state set, or \emph{performance funnel}, is defined as:
\begin{equation}
\mathcal{X}(t,\rho):=\{x\in\mathbb{R}^n:\ |x_i|<\rho_i(t),\ i=1,\dots,n\}
\label{eq:funnel}
\end{equation}
with $\rho(t):=[\rho_1(t),\dots,\rho_n(t)]^\top$. We synthesize the funnel and controller over the demonstration horizon $[0,T]$. If $\rho_i(t)$ reaches its steady-state value by $T$, setting $\rho_i(t)=\rho_i(T)$ for all $t>T$ extends the performance specification to the infinite horizon under the same modeling and operating conditions.

To transform the constraint $|x_i(t)|<\rho_i(t),~i=1,\dots,n$, into an unconstrained stabilization problem, define, for
$x\in\mathcal X(t,\rho)$, $\xi(t,x):=x\oslash\rho(t)$ and $\varepsilon(t,x):=\operatorname{atanh}(\xi(t,x))$,
where $\operatorname{atanh}$ is applied componentwise. Define also
$R(t,x):=
\operatorname{diag}\left(
\frac{1}{\rho_1(t)(1-\xi_1^2(t,x))},\ldots,
\frac{1}{\rho_n(t)(1-\xi_n^2(t,x))}
\right)$ and $s(t,x):=R(t,x)^\top\varepsilon(t,x).$
Along a trajectory $x(t)$, we use the shorthand
$\xi(t):=\xi(t,x(t))$, $\varepsilon(t):=\varepsilon(t,x(t))$,
$R(t):=R(t,x(t))$, and $s(t):=s(t,x(t))$. Differentiating
$\varepsilon(t)$ along the system trajectories yields:
\begin{equation}
\dot{\varepsilon}(t)
=
R(t)\bigl(f(x(t))+G(x(t))u(t,x(t))-q(t,x(t))\bigr)
\label{eq:epsdot_vector}
\end{equation}
where $q(t,x):=
x\odot\bigl(\dot{\rho}(t)\oslash\rho(t)\bigr)$ accounts for time-
varying funnel dynamics. Since
$\xi=\tanh(\varepsilon)$ componentwise, boundedness of
$\varepsilon(t)$ implies that $\xi(t)$ remains strictly inside
$(-1,1)^n$ and hence that $|x_i(t)|<\rho_i(t)$. Accordingly, PPC
typically seeks to establish boundedness of $\varepsilon$, for example,
through a Lyapunov function defined in the transformed coordinates.
Classical PPC~\cite{bechlaut} often employs feedback of the form
$u=-Ks$, with a constant gain matrix $K\succ0$. Without additional
mechanisms or assumptions, however, such feedback does not ensure satisfaction of hard input constraints. In this paper, we combine nominal model compensation with a learned PPC feedback gain to enforce the prescribed performance specifications under hard actuator limits.
\subsection{Working assumption}
\begin{assumption}
\label{ass:model}
There exist known and locally Lipschitz continuous nominal models $\hat{f}:\mathbb{R}^n \to \mathbb{R}^n$ and $\hat{G}:\mathbb{R}^n \to \mathbb{R}^{n \times n}$, known constants $\bar{\delta}_f, \bar{\delta}_G > 0$, and a known compact set $\mathcal X_c \subset \mathbb{R}^n$ such that:
\begin{equation}
f(x) = \hat{f}(x) + \Delta_f(x), \quad G(x) = \hat{G}(x) + \Delta_G(x)
\end{equation}
with $\|\Delta_f(x)\| \le \bar{\delta}_f$ and $\|\Delta_G(x)\| \le \bar{\delta}_G$ for all $x \in \mathcal X_c$. Moreover, the nominal input matrix $\hat G(x)$ is uniformly nonsingular on $\mathcal X_c$, i.e., there exists $\underline g > 0$ such that $\|\hat G(x)^{-1}\| \le \underline g^{-1},~ \forall x \in \mathcal X_c$. 
\end{assumption}

The set $\mathcal X_c$ is assumed to contain all sampled demonstrated
states and every candidate funnel considered during synthesis, i.e.,
$x_k^{(j)}\in\mathcal X_c$ for all $j=1,\dots,N$ and
$k=1,\dots,N_t$, and
$\mathcal X(t,\rho)\subseteq\mathcal X_c$ for all $t\in[0,T]$ and every candidate performance function $\rho(t)$. Since $\hat f$ is continuous on the compact set $\mathcal X_c$, define $\bar F_{\hat f}:=\sup_{x\in\mathcal X_c}\|\hat f(x)\|$.

In contrast to model-free PPC/FC designs \cite{bergerArch,fotiadis,autom,smc,ratesiso}, we exploit nominal model knowledge together with state trajectory data to enforce prescribed performance under input constraints. Rather than relaxing specifications online, we search for a funnel-controller pair compatible with actuator limits and verify feasibility a posteriori. Moreover, unlike classical PPC/FC assumptions requiring $\mathrm{Sym}(G(x)) \succeq \underline g I$~\cite{ppc,bergerArch,fotiadis,autom,smc,ratesiso,lanza,namerikawa,paper1}, we adopt the milder condition of uniform nonsingularity of $\hat G(x)$.
\begin{rem}
The availability of a nominal model on an operational region $\mathcal X_c$ is not restrictive in practice. For instance, in robotics and autonomous systems, approximate models are typically available from offline identification and can be refined using approximators, e.g., neural networks. Since our approach only requires bounded model uncertainty, it remains applicable under moderate modeling errors.
\end{rem}
\subsection{Control objective}

Given the system \eqref{eq:plant} and the state-only dataset $\mathcal{D}$, we aim to jointly synthesize a performance function $\rho(t)$ and a state-feedback controller $u(t,x)$ such that:

\begin{enumerate}
\item \textbf{Partial trajectory coverage:} for each state component $i=1,\dots,n$ and grid point $t_k$, the learned funnel boundary $\rho_i(t_k)$ encloses at least a prescribed fraction of the demonstrated values $\{|x_i^{(j)}(t_k)|\}_{j=1}^N$. This coverage mechanism is illustrated
in Fig.~\ref{fig:funnel_quantiles} and formalized through the empirical
quantile bounds introduced in Section~\ref{sec:funnel_param}.

\item \textbf{Prescribed performance:} for all initial conditions satisfying $|x_i(0)| < \rho_i(0)$, $i=1,\dots,n$, the closed-loop trajectory satisfies:
\begin{align}
    |x_i(t)|<\rho_i(t), ~\forall t\in[0,T], ~ i=1,\dots,n\label{eq:ppc_bounds}.
\end{align}
\vspace{-2mm}
\item \textbf{Input constraint satisfaction:} The control input satisfies:
\begin{equation}
|u_i(t,x)|\leq\bar u_i,
~
\forall\,t\in[0,T],~
i=1,\dots,n
\label{eq:input_constraints}
\end{equation}
for all $x\in\mathcal X(t,\rho).$
\end{enumerate}

\begin{rem}
Prescribed performance is the primary objective, while demonstration coverage and input feasibility constrain the funnel and controller design. The prescribed coverage level regulates the fraction of demonstrations enclosed by the funnel at each sampling instant, thereby providing robustness to outliers.
\end{rem}
\section{Funnel and Controller Parameterization with Design Constraints}
\label{sec:param_constraints}

We parameterize the performance funnel and the PPC controller and introduce the constraints required for their joint synthesis, which we present thereafter.

\subsection{Funnel Parameterization and Constraints}
\label{sec:funnel_param}

The performance function $\rho(t;\theta)$ defines the funnel \eqref{eq:funnel} and is parameterized componentwise using cubic B-splines:
\begin{equation}
\rho_i(t;\theta) = \theta_i^\top \Psi(t),
\quad \theta_i \in \mathbb{R}^{n_c}, \quad i=1,\dots,n
\label{eq:bspline_param}
\end{equation}
where $\Psi(t):= [\psi_1(t), \dots, \psi_{n_c}(t)]^\top$ is a vector of fixed $C^2$ basis functions. The number of basis functions $n_c$ regulates the trade-off between expressivity and regularity.
\begin{rem}
Cubic B-splines are used due to their $C^2$ smoothness and linear parameterization, which yields closed-form expressions for $\dot{\rho}_i(t;\theta)$ and $\ddot{\rho}_i(t;\theta)$ while preserving convexity in the parameters~\cite{splines}. Moreover, the B-spline basis functions are nonnegative and form a partition of unity, i.e., $\psi_j(t)\ge 0$ and $\sum_j \psi_j(t)=1$ for all $t\in[0,T]$, which is key later for deriving continuous-time guarantees such as positivity and boundedness of the funnel.
\end{rem}
\paragraph{Funnel Positivity Constraint}
Using the nonnegativity and partition-of-unity of the B-spline basis, we impose:
\begin{equation}
\theta_i \succeq \underline{\rho}_i \mathbf{1},
\quad i=1,\dots,n
\label{eq:theta_positive}
\end{equation}
where $\underline{\rho}_i$ is a design parameter that guarantees $\rho_i(t;\theta)\ge \underline{\rho}_i>0$.

\paragraph{Data-Driven Bounds}
For every state component $i=1,\dots,n$ and grid index
$k=1,\dots,N_t$, we constrain the learned funnel boundary to a
data-driven feasible interval:
\begin{equation}
\rho_i^{\mathrm{low}}(t_k)
\leq
\rho_i(t_k;\theta)
\leq
\rho_i^{\mathrm{up}}(t_k).
\label{eq:funnel_bounds}
\end{equation}
The envelopes $\rho_i^{\mathrm{low}}(t_k)$ and
$\rho_i^{\mathrm{up}}(t_k)$ are constructed using empirical quantiles
rather than the maximum value of expert trajectories. This reduces sensitivity to a small number of extreme state values and yields bounds that better represent the dominant
transient and steady-state behavior of the demonstrations.

To define the empirical quantiles, let $x_{i,k}^{[1]}\leq\cdots\leq x_{i,k}^{[N]}$
denote the order statistics obtained by sorting the absolute values of the $i$-th state
component across the $N$ demonstrations at time $t_k$, i.e.,
$\{|x_i^{(j)}(t_k)|\}_{j=1}^N$, with repeated values retained. For
$\gamma\in(0,1]$, define:
\begin{equation}
Q_{\gamma}\!\left(\{|x_i^{(j)}(t_k)|\}_{j=1}^N\right)
:=
x_{i,k}^{[\lceil\gamma N\rceil]}.
\label{eq:empirical_quantile}
\end{equation}
Thus, $Q_\gamma$ is the smallest demonstrated magnitude such that at
least a fraction $\gamma$ of the demonstrations do not exceed it.

Next, choose $0<\alpha\leq\beta\leq1$ and define:
\begin{align*}
\rho_i^{\mathrm{up}}(t_k)
&:=
Q_{\beta}\!\left(\{|x_i^{(j)}(t_k)|\}_{j=1}^N\right)
\\
\rho_i^{\mathrm{low}}(t_k)
&:=
\omega_{i,k}\rho_i^{\mathrm{up}}(t_k)
+(1-\omega_{i,k})
Q_{\alpha}\!\left(\{|x_i^{(j)}(t_k)|\}_{j=1}^N\right)
\end{align*}
where $\omega_{i,k}:=e^{-\lambda_i t_k}$ and $\lambda_i>0$. Since
$t_1=0$, we have $\omega_{i,1}=1$ and hence
$\rho_i^{\mathrm{low}}(t_1)=\rho_i^{\mathrm{up}}(t_1)$. As time
increases, $\omega_{i,k}$ decreases, and the lower envelope approaches
the tighter $\alpha$-quantile envelope. The parameter $\lambda_i$
controls the rate of this transition. For fixed $\alpha$, $\beta$, and $\lambda_i$, the envelopes are
precomputed. Since $\rho_i(t_k;\theta)$ is linear in $\theta$,
\eqref{eq:funnel_bounds} defines linear constraints on $\theta$. In
practice, the envelopes may be smoothed, e.g., using a moving-average
filter, before enforcing \eqref{eq:funnel_bounds} to reduce
grid-to-grid variations. This quantile-based approach is illustrated in Fig.~\ref{fig:funnel_quantiles}.
\begin{figure}[t]
\centering
\includegraphics[width=\linewidth,trim={0.2cm 0.2cm 0.2cm 0.2cm},clip]{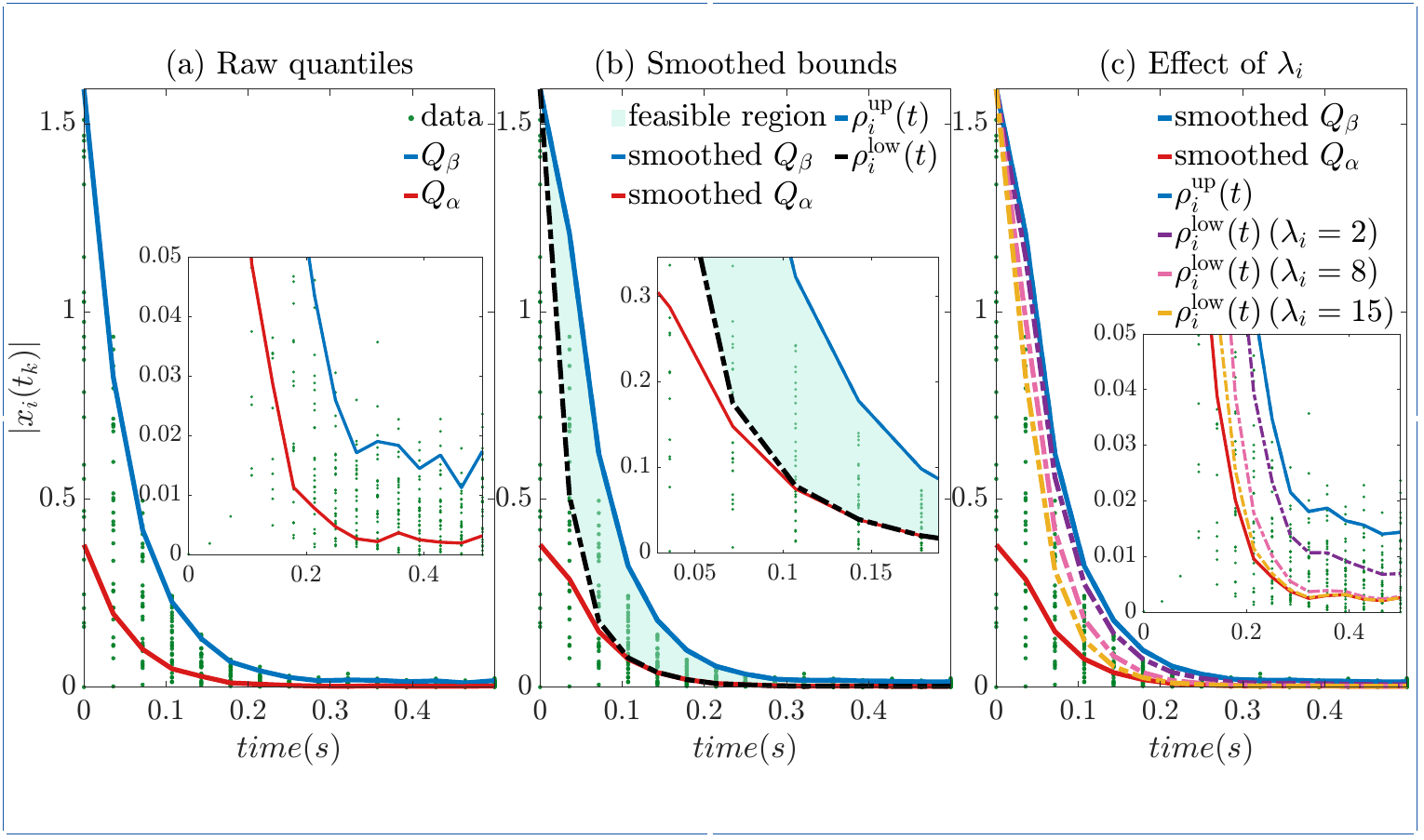}
\caption{Data-driven funnel bounds. Left: demonstrations and empirical
quantiles ($\alpha=0.2$, $\beta=0.9$). Middle: bounds smoothed using a
moving-average window of size two and the resulting feasible interval
$[\rho_i^{\mathrm{low}}(t),\rho_i^{\mathrm{up}}(t)]$. Right: effect of
$\lambda_i$ on $\rho_i^{\mathrm{low}}(t)$.}
\label{fig:funnel_quantiles}
\end{figure}
\begin{rem}
The envelopes $\rho_i^{\mathrm{up}}(t_k)$ and $\rho_i^{\mathrm{low}}(t_k)$ are
design bounds derived from the demonstrations, not estimates of an
underlying expert funnel.
\end{rem}

\paragraph{Funnel Dynamics Constraints}
Let $h:=\max \limits _{k=1,\dots,N_t-1}(t_{k+1}-t_k)$ denote the maximum grid spacing. Choose $v\in(0,1]$ and $c_i>0$ such that for all $i=1,\dots,n$ it holds $c_i\leq\frac{v\lambda_i}{h}$. To limit oscillations and control the maximum contraction rate, we impose, for $i=1,\dots,n$ and $k=1,\dots,N_t$:
\begin{align}
-(v\lambda_i-c_i h)
&\leq \dot{\rho}_i(t_k;\theta)
\leq v\lambda_i-c_i h
\label{eq:funnel_rate}\\
-c_i
&\leq \ddot{\rho}_i(t_k;\theta)
\leq c_i.
\label{eq:funnel_curv}
\end{align}

\begin{rem}
The parameter $\lambda_i$ determines the nominal contraction rate of
$\rho_i^{\mathrm{low}}(t)$, while $v$ scales the admissible rate of the learned
funnel. In implementation, $v$ is selected from a finite candidate set. The margin $c_i h$ accounts for the maximum change in
$\dot{\rho}_i$ between discrete time points, ensuring the continuous-time bound
$|\dot{\rho}_i(t;\theta)|\leq v\lambda_i$, as formalized
below.
\end{rem}

The following result shows that these constraints guarantee funnel
positivity and uniform continuous-time bounds on $\rho_i$,
$\dot{\rho}_i$, and $\ddot{\rho}_i$ over $[0,T]$.
\begin{theorem}
\label{lem:slope_uniform}
For each $i=1,\dots,n$, let
$\rho_i(\cdot;\theta):[0,T]\to\mathbb R$ be a $C^2$ cubic B-spline
that is a cubic polynomial on every interval $[t_k,t_{k+1}]$.
If \eqref{eq:funnel_curv} holds
at every discrete time $t_k$, $k=1,\dots,N_t$,
then $|\ddot\rho_i(t;\theta)|\leq c_i$ for all $t\in[0,T]$.
If, in addition, \eqref{eq:funnel_rate} holds
at every $t_k$, $k=1,\dots,N_t$,
then $|\dot\rho_i(t;\theta)|\leq v\lambda_i$ for all
$t\in[0,T]$. If, additionally, \eqref{eq:theta_positive} and
\eqref{eq:funnel_bounds} hold at every $t_k$, then, for every
$k=1,\dots,N_t-1$ and $t\in[t_k,t_{k+1}]$:
\[
0<\underline{\rho}_i
\leq\rho_i(t;\theta)
\leq\rho_i^{\mathrm{up}}(t_k)+v\lambda_i h.
\]
\end{theorem}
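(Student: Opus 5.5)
The plan is to go interval by interval and exploit the fact that, on each $[t_k,t_{k+1}]$, $\rho_i(\cdot;\theta)$ is a cubic polynomial. Three consequences follow: $\ddot\rho_i$ is affine there, $\dot\rho_i$ is obtained by integrating a bounded second derivative, and $\rho_i$ is obtained by integrating a bounded first derivative. The argument then proceeds in three stages, each using the previous one.

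\emph{Step 1 (curvature bound).} On $[t_k,t_{k+1}]$, $\ddot\rho_i(t;\theta)$ is an affine function of $t$, and since $\rho_i$ is $C^2$, its values at the endpoints coincide with $\ddot\rho_i(t_k;\theta)$ and $\ddot\rho_i(t_{k+1};\theta)$. An affine function on an interval attains its extrema at the endpoints. Hence \eqref{eq:funnel_curv} at $t_k$ and $t_{k+1}$ gives $|\ddot\rho_i(t;\theta)|\le c_i$ on the whole interval. Taking the union over $k$ covers $[0,T]$, because $t_1=0$ and $t_{N_t}=T$.

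\emph{Step 2 (slope bound).} Fix $t\in[t_k,t_{k+1}]$. By the mean value theorem and Step 1,
\[
|\dot\rho_i(t;\theta)|\le|\dot\rho_i(t_k;\theta)|+c_i(t-t_k).
\]
By \eqref{eq:funnel_rate}, the first term is at most $v\lambda_i-c_ih$. Since $t-t_k\le h$, the second term is at most $c_ih$. The sum is therefore at most $v\lambda_i$. This is exactly the role of the margin $c_ih$ described in the remark preceding the statement. The condition $c_i\le v\lambda_i/h$ ensures the interval in \eqref{eq:funnel_rate} is nonempty.

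\emph{Step 3 (funnel bounds).} For the lower bound, I would not use the grid constraints at all. By \eqref{eq:theta_positive}, nonnegativity and partition of unity of the basis give
\[
\rho_i(t;\theta)=\sum_j\theta_{i,j}\psi_j(t)\ge\underline\rho_i\sum_j\psi_j(t)=\underline\rho_i>0.
\]
For the upper bound, fix $t\in[t_k,t_{k+1}]$ and integrate $\dot\rho_i$, using Step 2:
\[
\rho_i(t;\theta)\le\rho_i(t_k;\theta)+v\lambda_i(t-t_k)\le\rho_i^{\mathrm{up}}(t_k)+v\lambda_ih,
\]
where the last inequality uses \eqref{eq:funnel_bounds} at $t_k$ and $t-t_k\le h$.

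I do not expect any step to be a real obstacle. The one point to state carefully is in Step 1. The affine claim for $\ddot\rho_i$ relies on the hypothesis that the knots are aligned with the grid, so that $\rho_i$ has no interior breakpoints in $(t_k,t_{k+1})$. Combined with $C^2$ continuity at the knots, this is what allows the endpoint values of $\ddot\rho_i$ to be identified with the constrained grid values.
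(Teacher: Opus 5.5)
Your proposal is correct and follows the paper's proof essentially step for step. The ingredients are the same: the endpoint argument for the affine $\ddot\rho_i$ on each $[t_k,t_{k+1}]$; integrating the curvature bound from $t_k$ so that the $c_i h$ margin absorbs the drift of $\dot\rho_i$; integrating the slope bound from $t_k$ and using the grid upper envelope $\rho_i^{\mathrm{up}}(t_k)$ for the upper estimate; and nonnegativity plus partition of unity of the B-spline basis for $\rho_i\ge\underline\rho_i>0$.

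Your remark that the hypothesis of no interior breakpoints in $(t_k,t_{k+1})$ is what justifies the endpoint argument is a useful clarification that the paper leaves implicit.
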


\begin{proof}
Since $\rho_i$ is a cubic spline that restricts to a cubic polynomial
on each interval $[t_k,t_{k+1}]$, $\ddot{\rho}_i$ is linear on that
interval. Therefore, $|\ddot{\rho}_i|$ attains its maximum at one of the
endpoints. Since \eqref{eq:funnel_curv} holds at every $t_k$, it follows
that $|\ddot{\rho}_i(t;\theta)|\leq c_i$ for all $t\in[0,T]$. Next, for $t\in[t_k,t_{k+1}]$, the fundamental theorem of calculus gives $\dot\rho_i(t;\theta)-\dot\rho_i(t_k;\theta)=\int_{t_k}^t \ddot\rho_i(\tau;\theta)\,d\tau$,
hence
$|\dot\rho_i(t;\theta)-\dot\rho_i(t_k;\theta)|
\leq c_i|t-t_k|\leq c_i h$. Combining with $|\dot\rho_i(t_k)|\le v\lambda_i-c_i h$ yields $|\dot\rho_i(t;\theta)|\le v\lambda_i$. Similarly, for $t\in[t_k,t_{k+1}]$, we have: $\rho_i(t;\theta)\le \rho_i(t_k;\theta)+v\lambda_i|t-t_k|
\le \rho_i^{\mathrm{up}}(t_k)+v\lambda_i h$ for any $t\in[t_k,t_{k+1}]$. Finally, since $\rho_i(t;\theta)=\theta_i^\top\Psi(t)$ with $\theta_i \succeq \underline\rho_i\mathbf{1}$ and the B-spline basis satisfies $\Psi(t)\succeq0$ and $\sum_{j=1}^{n_c}\psi_j(t)=1$, we obtain $\rho_i(t;\theta)
=
\sum_{j=1}^{n_c}\theta_{ij}\psi_j(t)
\ge
\underline\rho_i \sum_{j=1}^{n_c}\psi_j(t)
=
\underline\rho_i>0$ for all $t\in[0,T]$.
\end{proof}
\subsection{Controller Parameterization and Constraints}
\label{sec:controller_param}
\vspace{-1mm}
We consider the following PPC controller with nominal
model compensation:
\begin{equation}
u(t,x;\theta,w)
=
\mathrm{sat}_{\bar u}\!\left(
u_d(t,x;\theta,w)
\right)
\label{eq:controller}
\end{equation}
where the desired input is given by:
\begin{align}
    u_d(t,x;\theta,w):=\hat G(x)^{-1}\bigl[u_{\mathrm{n}}(t,x;\theta)-K(\xi;w)s \bigr]\label{desiredu}
\end{align}
with $u_{\mathrm{n}}(t,x;\theta)
:= q(t,x;\theta)-\hat f(x)
$.
The PPC signals $\xi$, $s$, and $q$ depend on $\rho(t;\theta)$ as defined in Section~\ref{PPC}. For clarity, this dependence is omitted in the sequel. The term $u_{\mathrm{n}}$ compensates the nominal drift and funnel dynamics, while the feedback term $-Ks$ enforces prescribed performance and provides robustness to model uncertainty. The saturation operator is used to ensure that the control input satisfies the actuator constraints $|u_i(t,x)|\le \bar u_i$ for all $t$.

\paragraph{Gain Parameterization} 
To match the componentwise structure of the prescribed performance constraints and maintain a low-complexity synthesis problem, we employ a diagonal gain matrix:
\begin{equation}
K(\xi;w) = \mathrm{diag}\bigl(k_1(\xi_1;w_1), \dots, k_n(\xi_n;w_n)\bigr).
\label{eq:gain_diag}
\end{equation}
Each gain is parameterized using radial basis functions (RBFs):
\begin{equation}
k_i(\xi_i;w_i)
=
k_{\min}+w_i^\top \phi_i(\xi_i), \qquad w_i \succeq 0
\label{eq:gain_param}
\end{equation}
where $\phi_i(\xi_i)
=
[\varphi_{i1}(\xi_i),\dots,\varphi_{iN_K}(\xi_i)]^\top
\in\mathbb R^{N_K}$ and $\varphi_{i\ell}(\xi_i)
=
\exp\!\left(
-\frac{(\xi_i-c_{i\ell})^2}{2\sigma_{i\ell}^2}
\right),
$ $i=1,\dots,n$, $\ell=1,\dots,N_K$,
with fixed centers $c_{i\ell}\in[-1,1]$ and widths
$\sigma_{i\ell}>0$, as well as a design parameter $k_{\min}>0$ that enforces a strictly positive baseline gain, ensuring a minimum level of feedback action required for stability analysis (cf. Section~\ref{sec:baseline_cert}). The condition $w_i \succeq 0$ ensures $k_i(\xi_i;w_i)\ge k_{\min}>0$ and restricts the parameterization to nonnegative combinations of basis functions, which simplifies the synthesis and ensures strictly positive gains. Hence, $K(\xi;w)\succeq k_{\min}I$. Let us also define the stacked parameter vector as $w := [w_1^\top,\dots,w_n^\top]^\top$.
\begin{rem}
RBFs enable state-dependent gain shaping across the funnel while preserving affine dependence on $w_i$ for fixed $\theta$. The parameters $N_K$, $c_{i\ell}$, and $\sigma_{i\ell}$ balance expressiveness and complexity; in practice, few uniformly spaced centers over $[-1,1]$ suffice. Alternative parameterizations (e.g., neural networks) could also be employed, but would generally introduce nonlinear constraints in the resulting optimization problem.
\end{rem}

\paragraph{Sampled Lyapunov Decrease Conditions}
Since boundedness of $\varepsilon(t)$ implies satisfaction of \eqref{eq:ppc_bounds}, consider the Lyapunov function candidate $V(\varepsilon)=\tfrac12\varepsilon^\top\varepsilon$. Along the closed-loop system we have $\dot V
=
s^\top\bigl(f(x)+G(x)u-q(t,x;\theta)\bigr)$.
Using \eqref{eq:controller} and the decompositions $f=\hat f+\Delta_f$ and $G=\hat G+\Delta_G$ defined in Assumption \ref{ass:model} as well as invoking the bounds on the model mismatch together with $\|u\|\le \|\bar u\|$, we obtain:
\begin{equation}
\dot V
\le
s^\top\bigl(\hat f(x)+\hat G(x)\mathrm{sat}_{\bar u}(u_d)-q(t,x;\theta)\bigr)
+
\|s\|\bigl(\bar\delta_f+\bar\delta_G\|\bar u\|\bigr)
\label{eq:Vdot_bound_sat}
\end{equation}
The bound in \eqref{eq:Vdot_bound_sat} holds for all
$t\in[0,T]$ and $x\in\mathcal X(t,\rho)$.
To reduce conservatism, we enforce strict Lyapunov decrease only at
demonstrated samples in the boundary region $\|s(t,x)\|\geq\delta$,
shown in red in Fig.~\ref{fig:lyap_activation}.

To this end, for each demonstration point $x_k^{(j)}$ lying strictly inside the candidate funnel, i.e., satisfying
$|x_i^{(j)}(t_k)|<\rho_i(t_k;\theta)$ for all $i$, define $\xi_k^{(j)} := x_k^{(j)} \oslash \rho(t_k;\theta)$,
$\varepsilon_k^{(j)} := \mathrm{atanh}(\xi_k^{(j)})$, 
$s_k^{(j)} := R_k^{(j)\top}\varepsilon_k^{(j)}$ with $R_k^{(j)} := R(t_k,x_k^{(j)})$. Moreover, define $u_{d_k}^{(j)} := u_d(t_k,x_k^{(j)};\theta,w)$, $A_k^{(j)} := s_k^{(j)\top}\hat G(x_k^{(j)})$ and $B_k^{(j)} := s_k^{(j)\top}\bigl(\hat f(x_k^{(j)})-q(t_k,x_k^{(j)};\theta)\bigr) +\|s_k^{(j)}\|\bigl(\bar\delta_f+\bar\delta_G\|\bar u\|\bigr)$. We enforce Lyapunov decrease selectively using the indicator function $b_\delta(\chi) := \mathbf{1}\{\chi \ge \delta\}$, which activates near the funnel boundary (i.e., for large $\|s\|$). Specifically, we impose:
\begin{equation}
b_\delta(\|s_k^{(j)}\|)
\Bigl(
A_k^{(j)}\,\mathrm{sat}_{\bar u}(u_{d_k}^{(j)})
+B_k^{(j)}
\Bigr)
\leq
-\mu\,b_\delta(\|s_k^{(j)}\|)
\label{eq:lyap_constraint}
\end{equation}
with $\mu>0$. Thus, \eqref{eq:lyap_constraint} enforces
$\dot V(\varepsilon_k^{(j)})\leq-\mu$ at active samples and is
inactive otherwise; see Fig.~\ref{fig:lyap_activation}.
\begin{figure}[t]
\centering
\includegraphics[width=1\linewidth,trim={0cm 0cm 0cm 0cm},clip]{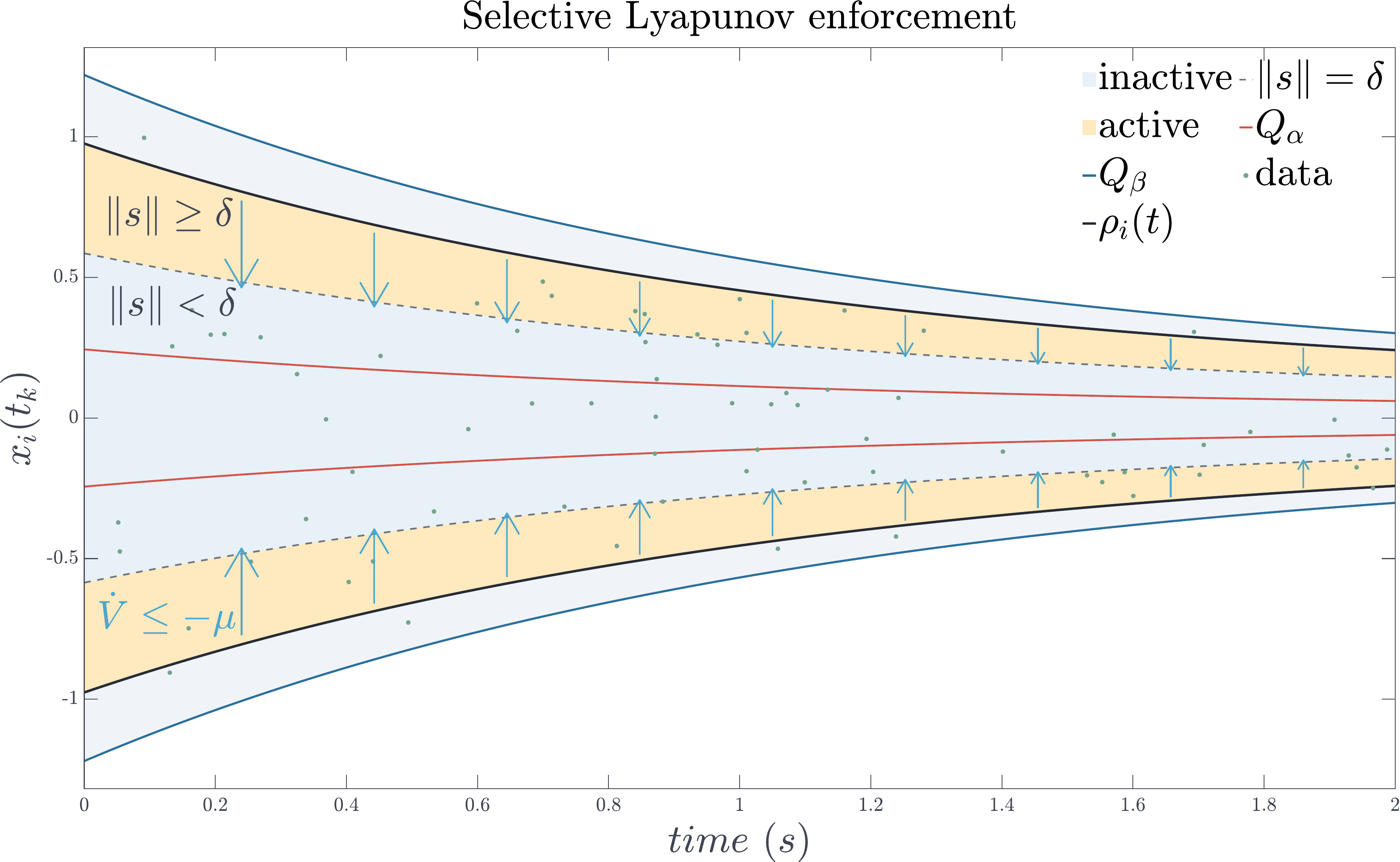}
\caption{Selective Lyapunov enforcement. The funnel $\rho_i(t)$ lies within the quantile envelopes $Q_\alpha$ and $Q_\beta$; $\|s\|=\delta$ splits the inactive ($\|s\|<\delta$) and active ($\|s\|\ge\delta$) regions, where arrows mark inward motion $\dot V\le-\mu$.}

\label{fig:lyap_activation}
\end{figure}

\section{Data-Driven Funnel and Controller Synthesis}\label{learning}
\subsection{Joint Optimization Problem}
\label{sec:optimization}
Building on the funnel and controller parameterizations in Section~\ref{sec:param_constraints}, we formulate the following data-driven joint synthesis problem:
\begin{subequations}
\label{eq:joint_problem}
\begin{align}
\min_{\theta, w} \quad 
& 
\sum_{k=1}^{N_t}\sum_{i=1}^n \rho_i(t_k;\theta)
+
\ell_w \|w\|^2
\label{eq:objective} \\
\text{s.t.}  \quad & 
\rho_i^{\mathrm{low}}(t_k)
\le
\rho_i(t_k;\theta)
\le
\rho_i^{\mathrm{up}}(t_k)
\label{eq:c1}
\\
& -(v\lambda_i - c_i h)
\le
\dot{\rho}_i(t_k;\theta)
\le
v\lambda_i - c_i h
\label{eq:c2}
\\
& -c_i
\le
\ddot{\rho}_i(t_k;\theta)
\le
c_i
\label{eq:c3}
\\
& \theta_i \succeq \underline{\rho}_i \mathbf{1}
\label{eq:c4}
\\
& b_\delta(\|s_k^{(j)}\|)\Bigl(
A_k^{(j)}\,\mathrm{sat}_{\bar u}(u_{d_k}^{(j)}) + B_k^{(j)}
\Bigr)
\leq
-\mu\,b_\delta(\|s_k^{(j)}\|)
\label{eq:c5}
\\
& w_i \succeq 0.
\label{eq:c6}
\end{align}
\end{subequations}
Constraints \eqref{eq:c1}-\eqref{eq:c3} are imposed for all
$i=1,\dots,n$ and $k=1,\dots,N_t$, \eqref{eq:c4} and
\eqref{eq:c6} for all $i=1,\dots,n$, and \eqref{eq:c5} for every
$(j,k)$ such that
$x_k^{(j)}\in\mathcal X(t_k,\rho(\cdot;\theta))$.
 The objective \eqref{eq:objective} promotes tight funnels while penalizing large feedback gains. The scalar $\ell_w>0$ weights the gain regularization. Constraints \eqref{eq:c1}-\eqref{eq:c4} define the feasible funnel set. Constraint~\eqref{eq:c5} is imposed for every $j,k$ such that
$|x_i^{(j)}(t_k)|<\rho_i(t_k;\theta)$ for all $i=1,\dots,n$, while \eqref{eq:c6} ensures positivity of the learned gains. Input feasibility follows directly from the saturation. Thus, a feasible solution of \eqref{eq:joint_problem} defines a candidate funnel-controller pair whose invariance condition is verified only at
the sampled points. Section~\ref{sec:guarantees} provides conditions
under which this certificate extends over continuous time and state
space. The main synthesis parameters affect the trade-off between performance, robustness, and feasibility as summarized in Table~\ref{tab:tuning}.
\begin{table}[t]
\caption{Main synthesis parameters and their roles.}
\label{tab:tuning}
\centering
\scriptsize
\setlength{\tabcolsep}{2pt}
\renewcommand{\arraystretch}{1.05}
\begin{tabular}{@{}p{0.23\linewidth}p{0.73\linewidth}@{}}
\hline
Parameter & Role and effect\\
\hline
$\alpha,\beta$
& Set the lower and upper quantile envelopes, regulating demonstration coverage and funnel tightness.\\
$\lambda_i$
& Sets the envelope convergence rate; larger values permit faster contraction but may require greater control authority.\\
$v,\gamma_v,v_{\min}$
& $v$ scales the admissible contraction rate, while $\gamma_v$ and $v_{\min}$ define the candidate search.\\
$c_i$ &
Bounds the funnel curvature $|\ddot{\rho}_i|$ and affects smoothness and feasibility via $c_i h$ in \eqref{eq:c2}. \\
$n_c,\underline{\rho}_i$
& Set the spline resolution and minimum funnel width.\\
$N_K,c_{i\ell},\sigma_{i\ell}$
& Determine the resolution and localization of the RBF gain parameterization.\\
$k_{\min},\ell_w$
& Set the baseline gain and regularization of the learned gain weights.\\
$\mu,\delta$
& Set the Lyapunov-decrease margin and activation threshold; larger $\mu$ strengthens decrease, whereas larger $\delta$ restricts enforcement closer to the boundary; see Fig \ref{fig:lyap_activation}.\\
\hline
\end{tabular}
\end{table}
\subsection{Feasibility-Driven Active-Set Synthesis}

The
optimization problem \eqref{eq:joint_problem} is nonconvex in general.
However, constraints \eqref{eq:c1}-\eqref{eq:c4} are linear in $\theta$.
For fixed $\theta$, the quantities $s_k^{(j)}$, $A_k^{(j)}$, and
$B_k^{(j)}$ are fixed, while $u_{d_k}^{(j)}$ is affine in $w$.
Thus, once $\theta$ is fixed, saturation is the only
source of nonconvexity in the controller subproblem.

We exploit this structure through a feasibility-driven active-set procedure that solves a linear funnel problem followed by a sequence of convex controller subproblems with fixed saturation patterns. The synthesis uses two nested loops. 

\emph{Outer loop (funnel synthesis):}
Initialize $v_0=v$. At outer iteration $q$, solve the funnel problem
\eqref{eq:c1}--\eqref{eq:c4} for the current $v_q$. The resulting
$\theta_q^\star$ and $v_q$ are held fixed during the subsequent controller
synthesis.

\emph{Inner loop (controller synthesis):}
Initialize $w_0:=\mathbf 0$, corresponding to the baseline gain
$K(\xi;w_0)=k_{\min}I$. At active-set iteration
$r\in\mathbb N_0$, the current gain $w_r$ induces, for every
$j=1,\dots,N$ and $k=1,\dots,N_t$ satisfying $x_k^{(j)}\in\mathcal X\bigl(\rho(t_k;\theta_q^\star)\bigr),$
the index sets:
\[
\begin{aligned}
\mathcal F_{k,r}^{(j)}
&:=
\left\{
i\in\{1,\dots,n\}:
|u_{d_k,i}^{(j)}(w_r)|\leq\bar u_i
\right\},\\
\mathcal S_{k,r}^{(j)}
&:=
\left\{
i\in\{1,\dots,n\}:
|u_{d_k,i}^{(j)}(w_r)|>\bar u_i
\right\}.
\end{aligned}
\]
These sets and the signs of the saturated components define the current
saturation pattern. Holding this pattern fixed, define:
\begin{equation}
\tilde u_{k,i}^{(j)}(w)=
\begin{cases}
u_{d_k,i}^{(j)}(w),
& i\in\mathcal F_{k,r}^{(j)},\\
\operatorname{sign}\!\left(
u_{d_k,i}^{(j)}(w_{r})
\right)\bar u_i,
& i\in\mathcal S_{k,r}^{(j)}.
\end{cases}
\label{eq:sat_pattern}
\end{equation}
Let $\tilde u_k^{(j)}(w)
:=
\bigl[
\tilde u_{k,1}^{(j)}(w),\dots,
\tilde u_{k,n}^{(j)}(w)
\bigr]^\top$. Replacing the saturation map in \eqref{eq:c5} by the affine branch
associated with the current pattern gives:
\begin{equation}
b_\delta(\|s_k^{(j)}\|)
\left(
A_k^{(j)}\tilde u_k^{(j)}(w)+B_k^{(j)}
\right)
\leq
-\mu\,b_\delta(\|s_k^{(j)}\|).
\label{eq:c5_active}
\end{equation}
This constraint is exact for any $w$ that preserves, relative to
$w_r$, whether each input component is unsaturated or saturated. In that case, $\tilde u_k^{(j)}(w)
=
\operatorname{sat}_{\bar u}\!\left(u_{d_k}^{(j)}(w)\right)$, and \eqref{eq:c5_active} coincides with \eqref{eq:c5}.

At iteration $r$, minimizing $\ell_w\|w\|^2$ subject
to \eqref{eq:c6} and \eqref{eq:c5_active} yields $w_{r+1}$. The
original constraint \eqref{eq:c5} is then evaluated at all considered
samples using the actual saturated input
$\operatorname{sat}_{\bar u}(u_{d_k}^{(j)}(w_{r+1}))$. If it holds,
$(\theta_q^\star,w_{r+1})$ is accepted; otherwise, the saturation pattern
is updated and the inner loop continues.

\emph{Outer update and termination:}
The inner loop for the current $v_q$ terminates without a verified
candidate if the controller subproblem has no solution, the pattern
stops changing while \eqref{eq:c5} remains violated, or the iteration
limit is reached. If $v_q>v_{\min}$, set:
\[
v_{q+1}
=
\max\{v_{\min},\gamma_v v_q\},
\qquad \gamma_v\in(0,1)
\]
and repeat the funnel and controller steps. The procedure terminates
after $v_q=v_{\min}$ has been tested. The value $v_{\min}$ is chosen
such that $c_i h\leq v_{\min}\lambda_i$ for all $i$. Reducing $v_q$
restricts the allowable funnel rate and can reduce the required control
authority; see Theorem~\ref{lem:ppc_certificate}.
Algorithm~\ref{alg:alt} summarizes the procedure.

\begin{algorithm}[t]
\caption{Feasibility-driven active-set synthesis}
\label{alg:alt}
\begin{algorithmic}[1]
\State \textbf{Input:}
$\mathcal D$, $\alpha,\beta$, $\lambda_i,c_i,\underline\rho_i$,
$\ell_w,\mu,\delta$
\State \textbf{Choose:}
$v\in(0,1]$, $\gamma_v\in(0,1)$,
$v_{\min}\in(0,v]$, and
$r_{\max}\in\mathbb N_{>0}$, with
$c_i h\leq v_{\min}\lambda_i$ for all $i=1,\dots,n$
\State Generate $v_0,\dots,v_Q$ using
$v_0=v$ and
$v_{q+1}=\max\{v_{\min},\gamma_vv_q\}$
until $v_Q=v_{\min}$

\For{$q=0,\dots,Q$}
    \State Solve the funnel linear program
    \Statex \hspace{\algorithmicindent}
    $\displaystyle
    \theta_q^\star\in
    \arg\min_\theta\sum_{k,i}\rho_i(t_k;\theta)
    \quad
    \text{s.t. }\eqref{eq:c1}\text{--}\eqref{eq:c4}$
    \If{\eqref{eq:c1}--\eqref{eq:c4} admit no solution}
        \State \textbf{break}
    \EndIf

    \State Define
    $\mathcal I_q\gets
    \{(j,k):x_k^{(j)}
    \in\mathcal X(t,\rho(\cdot;\theta_q^\star))\}$
    \State Compute $s_k^{(j)}$, $A_k^{(j)}$, and $B_k^{(j)}$
    for all $(j,k)\in\mathcal I_q$
    \State Initialize $w_0\gets\mathbf0$

    \For{$r=0,\dots,r_{\max}-1$}
        \State Determine
        $\mathcal F_{k,r}^{(j)}$, $\mathcal S_{k,r}^{(j)}$,
        and the saturation signs from $w_r$

        \State Solve the fixed-pattern controller problem
        \Statex \hspace{\algorithmicindent}
        $\displaystyle
        w_{r+1}\in
        \arg\min_w\ell_w\|w\|^2
        \quad
        \text{s.t. }\eqref{eq:c6},\eqref{eq:c5_active}$

        \If{\eqref{eq:c6} and \eqref{eq:c5_active} admit no solution}
            \State \textbf{break}
        \EndIf

        \If{\eqref{eq:c5} holds on $\mathcal I_q$ using
        $\operatorname{sat}_{\bar u}
        (u_{d_k}^{(j)}(w_{r+1}))$}
            \State \textbf{return} $(\theta_q^\star,w_{r+1})$
        \EndIf

        \If{the sets and signs induced by $w_{r+1}$ are unchanged}
            \State \textbf{break}
        \EndIf
    \EndFor
\EndFor

\State \textbf{return no verified candidate found}
\end{algorithmic}
\end{algorithm}

\begin{rem}
The finite search over $v$ and the active-set iteration
limit ensure termination, but not global optimality or an infeasibility
certificate for \eqref{eq:joint_problem}. For a fixed saturation
pattern, the funnel and controller steps are, respectively, a linear
program and a convex quadratic program, whose sizes scale mainly with
$n_c$, $N_K$, and $N N_t$. Actuator bounds hold by construction through
saturation; directly imposing $|u_{d,i}|\leq\bar u_i$ as a hard constraint in \eqref{eq:joint_problem}, would be more
conservative and could exclude candidates whose saturated inputs satisfy
the funnel-invariance condition \eqref{eq:c5}.
\end{rem}

\section{Theoretical Guarantees}
\label{sec:guarantees}
In this section, we establish formal safety and performance guarantees for the proposed framework. In particular, we present two complementary certificates: a conservative semi-global result based on actuator authority, and a data-driven local refinement.
\vspace{-1mm}
\subsection{Baseline Safety Certificate}
\label{sec:baseline_cert}
For a feasible pair $(\theta^\star,w^\star)$, define
$\underline k^\star:=\inf_{\xi\in(-1,1)^n}
\lambda_{\min}(K(\xi;w^\star))$,
$\bar k^\star:=\sup_{\xi\in(-1,1)^n}
\lambda_{\max}(K(\xi;w^\star))$, and
$s_{\mathrm{th}}^\star:=
(\bar\delta_f+\bar\delta_G\|\bar u\|)/\underline k^\star$.
For an initial condition inside the learned funnel, let
$\underline\rho^\star:=\min_{i,t}\rho_i(t;\theta^\star)$,
$\bar\rho^\star:=\max_{i,t}\rho_i(t;\theta^\star)$,
$\bar E^\star:=\max\{\|\varepsilon(0)\|,
\bar\rho^\star s_{\mathrm{th}}^\star\}$, and
$s_{\max}^\star:=
\frac{\bar E^\star\cosh^2(\bar E^\star)}
{\underline\rho^\star}$,
where the extrema are over $i=1,\dots,n$ and $t\in[0,T]$.
The following result provides a conservative semi-global certificate for funnel invariance under a sufficient control authority condition.

\begin{theorem}
\label{lem:ppc_certificate}
Consider \eqref{eq:plant} under Assumption~\ref{ass:model}. Let
$(\theta^\star,w^\star)$ be feasible for \eqref{eq:joint_problem}, suppose
that $|x_i(0)|<\rho_i(0;\theta^\star)$ for all $i=1,\dots,n$ and define $\lambda :=
\bigl[
\lambda_1,\dots,
\lambda_n
\bigr]^\top$. If
\begin{equation}
\bar u_i
\geq
\underline g^{-1}
\left(
v\|\lambda\|+\bar F_{\hat f}
+\bar k^\star s_{\max}^\star
\right),
\qquad i=1,\dots,n
\label{controlcond}
\end{equation}
then the closed-loop solution exists on $[0,T]$ and satisfies
$|x_i(t)|<\rho_i(t;\theta^\star)$ and
$|u_i(t,x(t);\theta^\star,w^\star)|\leq\bar u_i$
for all $t\in[0,T]$ and $i=1,\dots,n$.
\end{theorem}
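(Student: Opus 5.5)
The plan is a standard PPC argument in the transformed coordinates, run on the maximal solution, with one extra ingredient. The condition \eqref{controlcond} must guarantee that saturation is never active along the closed loop, so the controller acts as the unsaturated law $u_d$ and exact nominal cancellation is available.

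\emph{Step 1 (local existence).} On the open set $\Omega:=\{(t,x): t\in[0,T],\ x\in\mathcal X(t,\rho(\cdot;\theta^\star))\}$ the closed-loop vector field is locally Lipschitz in $x$ and continuous in $t$. This uses Theorem~\ref{lem:slope_uniform}: $\rho\ge\underline\rho_i>0$, $\rho\in C^2$ and $\dot\rho$ is bounded. It also uses Assumption~\ref{ass:model}, and the facts that $\hat G^{-1}$ is Lipschitz on the compact $\mathcal X_c\supseteq\mathcal X(t,\rho)$, the RBFs are smooth, and $\mathrm{sat}$ is Lipschitz. Since $(0,x(0))\in\Omega$, a maximal solution exists on $[0,\tau_{\max})$ and stays in $\Omega$. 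If $\tau_{\max}\le T$, it must approach $\partial\Omega$, i.e.\ $\|\varepsilon(t)\|\to\infty$. The goal is therefore to show that $\|\varepsilon(t)\|\le\bar E^\star$ on $[0,\tau_{\max})$.

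\emph{Step 2 (a priori bounds and no saturation).} On the sublevel set $\{\|\varepsilon\|\le\bar E^\star\}$ we have $|\varepsilon_i|\le\bar E^\star$. Hence $1-\xi_i^2=\cosh^{-2}(\varepsilon_i)\ge\cosh^{-2}(\bar E^\star)$, so $\|R\|\le\cosh^2(\bar E^\star)/\underline\rho^\star$ and $\|s\|\le s_{\max}^\star$. Also $|\dot\rho_i/\rho_i|\,|x_i|=|\dot\rho_i|\,|\xi_i|<v\lambda_i$ by Theorem~\ref{lem:slope_uniform}, so $\|q\|\le v\|\lambda\|$. It follows that $\|u_d\|\le\underline g^{-1}\bigl(v\|\lambda\|+\bar F_{\hat f}+\bar k^\star s_{\max}^\star\bigr)\le\bar u_i$ for each $i$, since $|u_{d,i}|\le\|u_d\|$. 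By \eqref{controlcond}, then, $\mathrm{sat}_{\bar u}(u_d)=u_d$ on this set, and input satisfaction holds there.

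\emph{Step 3 (Lyapunov argument).} While the trajectory stays in the sublevel set, substitute $u=u_d$ into $\dot V=s^\top(f+Gu-q)$. The nominal parts cancel, leaving $\dot V=-s^\top K s+s^\top(\Delta_f+\Delta_G u)\le -\underline k^\star\|s\|^2+\|s\|(\bar\delta_f+\bar\delta_G\|\bar u\|)$, using $\|u\|\le\|\bar u\|$. Hence $\dot V<0$ whenever $\|s\|>s_{\mathrm{th}}^\star$.

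Next relate $\|s\|$ to $\|\varepsilon\|$. Since $R\succeq\mathrm{diag}(1/\rho_i)\succeq I/\bar\rho^\star$, we get $\|s\|\ge\|\varepsilon\|/\bar\rho^\star$. So $\|\varepsilon\|>\bar\rho^\star s_{\mathrm{th}}^\star$ implies $\dot V<0$.

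Now use a first-exit contradiction. Suppose $\|\varepsilon(t_1)\|>\bar E^\star$ for some $t_1$, and let $t_0$ be the last time before $t_1$ at which $\|\varepsilon\|=\bar E^\star$. On $(t_0,t_1]$ the state satisfies $\|\varepsilon\|>\bar\rho^\star s_{\mathrm{th}}^\star$, so $V$ decreases there. The bounds of Step 2 apply on this interval because, by continuity, the trajectory has only just left the set; a slightly enlarged level $\bar E^\star+\eta$ with $\eta\downarrow0$ handles this. This contradicts the increase of $V$.

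Thus $\varepsilon$ stays bounded, so $\tau_{\max}>T$, $|x_i|<\rho_i$ on $[0,T]$, and $|u_i|\le\bar u_i$ by saturation or by Step 2.

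\emph{Main obstacle.} The delicate step is the circularity between Steps 2 and 3. The no-saturation bound needs $\|\varepsilon\|\le\bar E^\star$, while the decrease of $\|\varepsilon\|$ needs no saturation. I would resolve this with the first-exit-time argument, using an $\eta$-margin. If the margin does not close exactly at $\bar E^\star$, I would instead note that condition \eqref{controlcond} gives robustness to a strict enlargement by continuity. Alternatively, I would bound $\dot V$ via \eqref{eq:Vdot_bound_sat} directly on a neighborhood of the level set.
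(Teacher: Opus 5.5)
Your route is the paper's: $V=\tfrac12\varepsilon^\top\varepsilon$ on the maximal solution, exact nominal cancellation when $u=u_d$, the ultimate bound $\|\varepsilon\|\le\bar E^\star$ via $\|s\|\ge\|\varepsilon\|/\bar\rho^\star$, then $\|s\|\le s^\star_{\max}$ and $\|q\|\le v\|\lambda\|$ to keep $u_d$ inside the limits, and finally continuation. You are right that ``no saturation'' and ``$\|\varepsilon\|\le\bar E^\star$'' depend on each other; the paper passes over this with ``saturation cannot become active.'' However, your way of closing the loop does not work.

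In the first-exit step you claim $V$ decreases on $(t_0,t_1]$. That needs $u=u_d$ and $\|s\|\le s^\star_{\max}$ there. On that interval $\|\varepsilon\|>\bar E^\star$, so Step~2 certifies neither, and ``the trajectory has only just left the set'' gives nothing quantitative. The $\eta$-enlargement would require \eqref{controlcond} with $s^\star_{\max}$ replaced by $(\bar E^\star+\eta)\cosh^2(\bar E^\star+\eta)/\underline\rho^\star$, which is strictly larger than $s^\star_{\max}$. Since \eqref{controlcond} is a non-strict inequality, it may hold with equality. In that case there is no slack, and the asserted ``robustness to a strict enlargement by continuity'' is false.

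The missing idea is to argue at the exit instant alone, where the state still lies in the closed sublevel set. Since $V(t)>V(t_0)$ for $t\in(t_0,t_1]$, differentiability gives $\dot V(t_0)\ge0$. At $t_0$, however, Step~2 gives $\mathrm{sat}_{\bar u}(u_d)=u_d$. Step~3 then yields $\dot V(t_0)\le\|s\|\bigl(\bar\delta_f+\bar\delta_G\|\bar u\|-\underline k^\star\|s\|\bigr)$, evaluated at $t_0$. Because $\|\varepsilon(t_0)\|=\bar E^\star>0$ (as $\bar\delta_f>0$), some $\xi_i(t_0)\ne0$. For that index $R_{ii}=1/(\rho_i(1-\xi_i^2))>1/\rho_i$, hence $\|s(t_0)\|>\|\varepsilon(t_0)\|/\bar\rho^\star\ge s^\star_{\mathrm{th}}$ strictly. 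This gives $\dot V(t_0)<0$, a contradiction.

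This argument requires no margin in \eqref{controlcond}, but it does require the strict inequality. With your non-strict $\|s\|\ge\|\varepsilon\|/\bar\rho^\star$, you only get $\dot V(t_0)\le0$ in the case $\bar E^\star=\bar\rho^\star s^\star_{\mathrm{th}}$, which does not exclude the exit. Your third alternative can also be made rigorous: bound $\dot V$ through \eqref{eq:Vdot_bound_sat} on a neighbourhood of the level set, using continuity and compactness of $\{(t,x):t\in[0,T],\ \|\varepsilon(t,x)\|=\bar E^\star\}$. But that too works only after this same strict negativity on the level set is established.
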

\begin{proof}
For notational simplicity, we omit the explicit dependence on $\theta^\star$ and write $\rho(t)$, $\xi(t)$, $\varepsilon(t)$, $R(t)$, $s(t)$, $q(t,x)$, $u_d(t,x;w^\star)$ and $u(t,x;w^\star)$ whenever no ambiguity arises. Let $\tau_{\max}\in(0,T]$ denote the maximal time such that
$\xi(t)\in\Omega_\xi$ for all $t\in[0,\tau_{\max})$. By Theorem~\ref{lem:slope_uniform}, the constraints \eqref{eq:c1}-\eqref{eq:c4} imply $\rho_i(t)>0$ on $[0,T]$ and the initial condition satisfies
$|\xi_i(0)|<1$. Hence, on this interval, the transformed error
$\varepsilon_i=\atanh(\xi_i)$ is well defined. Next, consider the open set $\Omega_\xi :=
\left\{\xi \in \mathbb{R}^n:
\xi_i\in(-1,1),\;
i=1,\dots,n\right\}$. Exploiting Theorem 54 in \cite{ds} (pp. 476) we conclude the existence and uniqueness of a maximal solution $\xi:[0,\tau_{\max})\to \Omega_\xi$ such that
$\xi(t)\in\Omega_\xi$ for all $t\in[0,\tau_{\max})$.

Consider the Lyapunov function $V(\varepsilon)=\tfrac12 \varepsilon^\top\varepsilon$. 
Consider also the maximal subinterval of $[0,\tau_{\max})$ on which saturation is inactive, so that
$u(t,x;w^\star)=u_d(t,x;w^\star)$. Then, from \eqref{eq:Vdot_bound_sat}:
\begin{equation*}
\dot V
\le
s^\top\bigl(\hat f(x)+\hat G(x)u_d(t,x;w^\star)-q(t,x)\bigr)
+
\|s\|\bigl(\bar\delta_f+\bar\delta_G\|\bar u\|\bigr).
\end{equation*}
Substituting \eqref{desiredu} and since $K(\xi;w^\star)\succeq \underline k^\star I$, we have:
\begin{equation}
\dot V
\le
-\underline k^\star\|s\|^2
+
\|s\|\bigl(\bar\delta_f+\bar\delta_G\|\bar u\|\bigr).
\label{eq:vdot_unsat_end}
\end{equation}
Hence, $\dot V\leq0$ whenever $\|s\|\geq s_{\mathrm{th}}^\star$. Since
$\|s\|\geq\|\varepsilon\|/\bar\rho^\star$, it follows that
$\dot V\leq0$ whenever
$\|\varepsilon\|\geq\bar\rho^\star s_{\mathrm{th}}^\star$.
Therefore,
$\|\varepsilon(t)\|\leq\bar E^\star$ on $[0,\tau_{\max})$.
Since $|\xi_i(t)|\leq\tanh(\bar E^\star)$ we have $\underline\rho^\star \|R(t)\|
\leq
\cosh^2(\bar E^\star).$
Therefore:
\[
\|s(t)\|
\leq
\|R(t)\|\,\|\varepsilon(t)\|
\leq
\frac{\bar E^\star\cosh^2(\bar E^\star)}
{\underline\rho^\star}
=s_{\max}^\star.
\]
Moreover, since $|x_i(t)|<\rho_i(t)$ on $[0,\tau_{\max})$ and by leveraging Theorem~\ref{lem:slope_uniform} we get $|q_i(t,x)|\le |\dot\rho_i(t)| \le  v\lambda_i$ for all
$i=1,\dots,n$. Therefore, it holds that $\|q(t,x)\|\le v\|\lambda\|$. By the containment assumption, $x(t)\in\mathcal X_c$ on $[0,\tau_{\max})$. Using the assumption $\|\hat G(x)^{-1}\|\le \underline g^{-1}$ and $\|\hat f(x)\|\le \bar F_{\hat f}$ on $\mathcal X_c$, we conclude that:
\[
\|u_d(t)\|
\leq
\underline g^{-1}
\left(
v\|\lambda\|+\bar F_{\hat f}
+\bar k^\star s_{\max}^\star
\right).
\]
Condition~\eqref{controlcond} implies
$|u_{d,i}(t)|\leq\bar u_i$ on this subinterval, thus saturation cannot become active on $[0,\tau_{\max})$.
Next define the set $\Omega'_\xi :=
\left\{\xi \in \mathbb{R}^n:
|\xi_i|\leq \tanh(\bar E^\star),\;
i=1,\dots,n\right\}$
Since $\xi(t) \in \Omega'_\xi \subset \Omega_\xi$, standard continuation arguments imply that the solution cannot leave
$\Omega_\xi$ before $T$ and therefore extends to the entire interval $[0,T]$, concluding the proof.
\end{proof}

\begin{rem}
Condition~\eqref{controlcond} is sufficient and may be conservative. It
uses a uniform norm bound for all desired-input components over the
entire funnel and requires $u_d$ to remain within the actuator limits,
so that saturation is inactive and $u=u_d$. Less conservative
certificates could use actuator-specific bounds over the relevant
funnel region or verify Lyapunov decrease for individual saturation
patterns.
\end{rem}
\vspace{-1mm}
\subsection{Data-Driven Local Certification}
\label{sec:local_cert}

We now give a local certificate under explicit regularity and coverage assumptions for strict Lyapunov decrease near the funnel boundary. 
\begin{theorem}
\label{prop:local}
Consider \eqref{eq:plant} under Assumption~\ref{ass:model}, and let
$(\theta^\star,w^\star)$ be feasible for \eqref{eq:joint_problem}. Define $\mathcal X_\delta
:=
\{
(t,x)\in[0,T]\times\mathbb R^n:
|x_i(t)|<\rho_i(t;\theta^\star),\ i=1,\dots,n,\ 
\|s(t,x)\|\geq\delta
\}.$ and assume that $\mathcal X_\delta\subseteq[0,T]\times\mathcal X_c$. Define $V(\varepsilon)=\tfrac12 \varepsilon^\top\varepsilon$ and assume that
$\dot V(\varepsilon(t,x))$ is $L$-Lipschitz on $\mathcal X_\delta$ with
respect to $|t-t'|+\|x-x'\|$. Suppose also that, for every
$(t,x)\in\mathcal X_\delta$, there exists a data point
$x_k^{(j)}$, with $j=1,\dots,N$ and $k=1,\dots,N_t$, such that $|t-t_k|\leq h$ and $\|x-x_k^{(j)}\|\leq\varrho_{\mathcal D}$.
If $L(h+\varrho_{\mathcal D})<\mu$, then, for every initial condition
satisfying $|x_i(0)|<\rho_i(0;\theta^\star)$, the closed-loop
solution exists on $[0,T]$ and satisfies
\[
|x_i(t)|<\rho_i(t;\theta^\star),
\qquad
|u_i(t,x(t);\theta^\star,w^\star)|\leq\bar u_i
\]
for all $t\in[0,T]$ and $i=1,\dots,n$.
\end{theorem}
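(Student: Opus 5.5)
The plan is to transfer the sampled decrease condition \eqref{eq:c5} to the whole boundary region $\mathcal X_\delta$ through the Lipschitz and coverage assumptions, and then to run the same maximal-solution and continuation argument as in the proof of Theorem~\ref{lem:ppc_certificate}. The difference is that boundedness of $\varepsilon$ now comes from the strict decrease of $V$ near the boundary, not from an actuator-authority bound.

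\emph{Step 1: pointwise decrease on $\mathcal X_\delta$.} Fix $(t,x)\in\mathcal X_\delta$ and take a data point $x_k^{(j)}$ with $|t-t_k|\le h$ and $\|x-x_k^{(j)}\|\le\varrho_{\mathcal D}$. The first thing to check is that $(t_k,x_k^{(j)})$ itself lies in $\mathcal X_\delta$, i.e.\ inside the funnel with $\|s_k^{(j)}\|\ge\delta$. This is needed for two reasons: \eqref{eq:c5} must be active there, and the Lipschitz bound is only assumed on $\mathcal X_\delta$.

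I would read the coverage hypothesis as supplying such a point, namely one with $b_\delta(\|s_k^{(j)}\|)=1$. This is the weakest link. If the text does not support that reading, I would state it as the natural interpretation: samples are drawn from the active region.

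Given this, \eqref{eq:c5} together with the bound \eqref{eq:Vdot_bound_sat} (with $\hat G\,\mathrm{sat}_{\bar u}(u_d)$, which is exactly the quantity entering $A_k^{(j)}$) gives $\dot V(t_k,x_k^{(j)})\le-\mu$. Here $\dot V$ should be understood as the upper bound $A\,\mathrm{sat}(u_d)+B$, or as the true derivative, which is dominated by it.

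Lipschitz continuity then yields
\[
\dot V(t,x)\le -\mu + L\bigl(|t-t_k|+\|x-x_k^{(j)}\|\bigr)\le -\mu+L(h+\varrho_{\mathcal D})<0.
\]
So $\dot V<0$ everywhere on $\mathcal X_\delta$.

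\emph{Step 2: invariance.} As in Theorem~\ref{lem:ppc_certificate}, Theorem~\ref{lem:slope_uniform} gives $\rho_i>0$. Local Lipschitzness of the closed loop, including the saturation, which is Lipschitz, gives a maximal solution with $\xi(t)\in\Omega_\xi$ on $[0,\tau_{\max})$.

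Let $c:=\max\{V(\varepsilon(0)),\ \sup\{V(\varepsilon):\|s\|<\delta\}\}$. I expect finiteness of the second term to be the main obstacle, so I will give a sharper argument. Since $R(t,x)\succeq \bar\rho^{\star-1}I$, we have $\|s\|\ge\|\varepsilon\|/\bar\rho^\star$. Hence $\|s\|<\delta$ implies $\|\varepsilon\|<\bar\rho^\star\delta$. The sublevel bound is therefore
\[
\|\varepsilon(t)\|\le E:=\max\{\|\varepsilon(0)\|,\bar\rho^\star\delta\},
\]
by the standard argument: whenever $\|\varepsilon\|\ge\bar\rho^\star\delta$, the state is in $\mathcal X_\delta$ and $V$ strictly decreases, so $V$ cannot exceed $\tfrac12E^2$.

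Consequently $|\xi_i(t)|\le\tanh(E)<1$ on $[0,\tau_{\max})$. The solution stays in a compact subset of $\Omega_\xi$, and the continuation argument gives $\tau_{\max}=T$, hence $|x_i(t)|<\rho_i(t)$ on $[0,T]$.

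\emph{Step 3: inputs.} The bound $|u_i|\le\bar u_i$ holds trivially, since $u=\mathrm{sat}_{\bar u}(u_d)$ componentwise.

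Unlike Theorem~\ref{lem:ppc_certificate}, no argument that saturation stays inactive is needed. The decrease in Step 1 already accounts for the saturated input, because \eqref{eq:c5} is imposed on $\mathrm{sat}_{\bar u}(u_d)$. The main care point remains Step 1's requirement that the nearby sample lies in the active region, so that \eqref{eq:c5} genuinely yields $-\mu$ there.
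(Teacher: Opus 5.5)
Your proposal is correct and follows the paper's proof. The paper reads the coverage hypothesis the same way, as supplying a sample $(t_k,x_k^{(j)})\in\mathcal X_\delta$ at which \eqref{eq:c5} is active; it then extends $\dot V\le-\mu$ to all of $\mathcal X_\delta$ via the Lipschitz bound, and gets the input bounds from saturation. The only difference is the invariance step: the paper argues by contradiction that approaching the funnel boundary forces $\|s\|\to\infty$ and hence entry into $\mathcal X_\delta$, whereas your sublevel bound $\|\varepsilon\|\le\max\{\|\varepsilon(0)\|,\bar\rho^\star\delta\}$ via $\|s\|\ge\|\varepsilon\|/\bar\rho^\star$ reaches the same conclusion slightly more explicitly.
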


\begin{proof}
Fix any $(t,x)\in\mathcal X_\delta$. By the coverage assumption, select
a sampled point $(t_k,x_k^{(j)})\in\mathcal X_\delta$ such that
\[
|t-t_k|\leq h,
~
\|x-x_k^{(j)}\|\leq\varrho_{\mathcal D}.
\] Since
$\|s_k^{(j)}\|\geq\delta$, we have
$b_\delta(\|s_k^{(j)}\|)=1$. Therefore, feasibility of
\eqref{eq:joint_problem}, together with
\eqref{eq:Vdot_bound_sat} and \eqref{eq:c5}, gives:
\[
\dot V(\varepsilon_k^{(j)})
\leq
A_k^{(j)}\operatorname{sat}_{\bar u}(u_{d_k}^{(j)})
+B_k^{(j)}
\leq-\mu.
\]
Lipschitz continuity then yields:
\begin{align*}
\dot V(\varepsilon(t,x))
&\leq
-\mu+L\bigl(|t-t_k|+\|x-x_k^{(j)}\|\bigr)\\
&\leq
-\mu+L(h+\varrho_{\mathcal D})<0.
\end{align*}
Hence, $\dot V<0$ for all $(t,x)\in\mathcal X_\delta$.

Next, consider any trajectory with $|x_i(0)|<\rho_i(0;\theta^\star)$. Suppose, by contradiction, that there exists a first
$t_b\in(0,T]$ such that $|\xi_i(t)|\to1$ as $t\uparrow t_b$
for some $i$. Then $|\varepsilon_i(t)|=|\atanh(\xi_i(t))|\to\infty$ as $t\uparrow t_b$, so $V(\varepsilon(t))\to\infty$. Moreover, since $s_i(t)=\varepsilon_i(t)/(\rho_i(t)(1-\xi_i^2(t)))$, it follows that $\|s(t)\|\to\infty$, and thus there exists $t_\delta<t_b$ such that $\|s(t)\|\ge\delta$ for all $t\in[t_\delta,t_b)$, i.e., $(t,x(t))\in\mathcal X_\delta$. Hence $\dot V(\varepsilon(t))<0$ on $[t_\delta,t_b)$, so $V(\varepsilon(t))$ is decreasing there and remains bounded, contradicting $V(\varepsilon(t))\to\infty$. Therefore $|\xi_i(t)|<1$ for all $t\in[0,T]$, which implies $|x_i(t)|<\rho_i(t;\theta^\star)$. Input constraints follow from the saturation in \eqref{eq:controller}, completing the proof.
\end{proof}

\begin{rem}
Unlike Theorem~\ref{lem:ppc_certificate}, this certificate replaces the uniform actuator-authority condition \eqref{controlcond} with Lipschitz regularity and sufficiently dense active demonstrations near the funnel boundary. Feasibility of
\eqref{eq:joint_problem} verifies Lyapunov decrease only at finitely many
sampled points; the regularity and coverage conditions are needed to
extend this guarantee to the surrounding time-state region and
establish funnel invariance.
\end{rem}
\section{Simulation Results}
We consider a nonlinear coupled-tank system of the form \eqref{eq:plant} with:
\[
f(x)=\begin{bmatrix}
-\frac{\alpha_1}{A_1}\sqrt{x_1 + h_1^\star}
-\frac{\alpha_3}{A_1}\sqrt{x_1 - x_2 + h_1^\star - h_2^\star} \\
-\frac{\alpha_2}{A_2}\sqrt{x_2 + h_2^\star}
+\frac{\alpha_3}{A_2}\sqrt{x_1 - x_2 + h_1^\star - h_2^\star}
\end{bmatrix}
\]
and constant input matrix $G=\mathrm{diag}(1/A_1,\,1/A_2)$. The state is denoted by $h = [h_1\;h_2]^\top$, where $h_i$ is the liquid level in tank $i$, and
$x=h-h^\star$ denotes its deviation from the setpoint
$h^\star=[0.8\;0.4]^\top$. Nominal parameters are $A_1=A_2=1$, $\alpha_1=\alpha_2=0.5$, $\alpha_3=0.3$, while the true system is given as $A_1=1.03$, $A_2=0.97$, $\alpha_1=0.54$, $\alpha_2=0.46$, $\alpha_3=0.33$, i.e., $\bar\delta_f = 0.05$ and $\bar\delta_G = 0.031$.

We generate $N=30$ trajectories using heterogeneous PPC controllers $u_{\mathrm{exp}}^{(j)}(t,x)=\mathrm{sat}_{\bar u}\bigl(-K_j s^{(j)}(t,x)\bigr)$, with $K_j=\mathrm{diag}(k_{j,1},k_{j,2})$, $k_{j,1},k_{j,2}\in[8,20]$ and $s^{(j)}$ is computed as in Section~\ref{PPC} with the prescribed exponential funnels $\rho_i^{(j)}(t)
=
\bigl(\rho_i(0)-\rho_{\infty,i}^{(j)}\bigr)
e^{-l_i^{(j)}t}
+\rho_{\infty,i}^{(j)}$, where $l_i^{(j)}\in[6,14]$, $\rho_{\infty,1}^{(j)}\in[0.03,0.12]$, $\rho_{\infty,2}^{(j)}\in[0.02,0.10]$,~ $j=1,\dots,30$, $i=1,2$, while $\bar u=[6,\,5]^\top$ and $\rho(0)=[0.55,\,0.45]^\top$ are fixed. We deliberately use existing PPC controllers to
generate the expert trajectories in this example. We make this choice simply to generate demonstrations with desirable
behavior; in practice, the state-trajectory data could be collected
using any suitable controller or through expert manual operation. Trajectories are simulated over $[0,T]$, $T=5~s$, from admissible initial conditions satisfying $h_1(0) > h_2(0) > 0$, sampled at $N_t=281$. Empirical quantiles of $|x_i|$ are computed with $\beta=0.95$ and $\alpha=0.1$, $\lambda_i=10$, and are smoothed via a moving average (window $5$). Funnels are parameterized by cubic B-splines ($n_c=250$, $\underline{\rho}_i=10^{-4}$, $c_i=40$), and gains via Gaussian RBFs (28 centers in $[-0.95,0.95]$, $\sigma=0.15$, $k_{\min}=0.1$). The parameters are $\mu=0.1$, $\delta=3$. The synthesis uses $\ell_w=10^{-5}$, $v=1$, $\gamma_v=0.8$, $v_{\min}=0.08$, and $r_{\max}=10$. Fig.~\ref{fig:coupled_tank_main} summarizes the learned funnel–controller pair. The learned funnels remain within the empirical quantile envelopes (regulated by $\alpha,\beta$), showing that the proposed method captures the transient and steady-state behavior of heterogeneous demonstrations. 
\begin{figure}[t]
\centering
\includegraphics[width=1\linewidth,trim={0.0cm 0.0cm 0cm 0.0cm},clip]{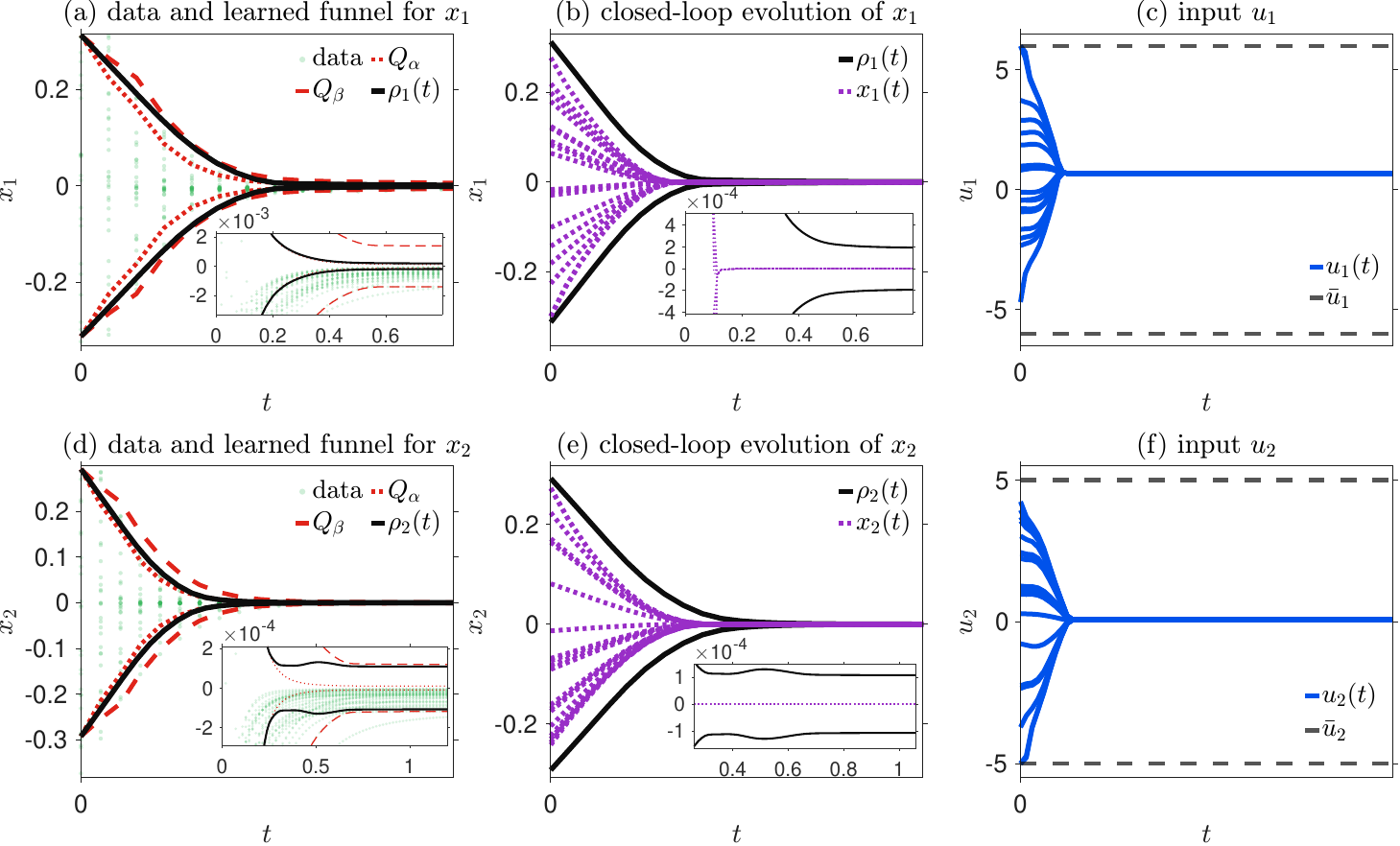}
\caption{Coupled-tank system. Left: data, $\alpha/\beta$-quantiles, and learned funnels. Middle: closed-loop trajectories for $20$ unseen initial conditions. Right: inputs with bounds.}
\label{fig:coupled_tank_main}
\end{figure}

For $20$ unseen initial conditions, the closed-loop trajectories remain strictly inside the learned funnels, showing empirical success on unseen initial conditions. At the same time, the control inputs satisfy the actuator bounds for all simulations. Notably, although the demonstrations are generated by multiple PPC controllers with different gains and funnels, the proposed method learns a single closed-form controller that mimics their behavior within the user-selected quantile bounds defined by $\alpha,\beta$. Since $f(0)\neq0$, the origin is not an open-loop equilibrium. Thus,
without drift compensation, model-free PPC generally achieves only
practical regulation, as shown by the green trajectories in the zoomed plots of Figs.~\ref{fig:coupled_tank_main}(a) and (d). The proposed
model-based controller compensates for this drift and respects the input
limits while achieving exponential convergence to the origin in this example.

\section{Conclusion}
In this work, we studied the problem of learning actuator-feasible prescribed performance controllers from state trajectory data for nonlinear systems with input constraints. We proposed a data-driven framework that jointly synthesizes a time-varying performance funnel and a structured PPC controller from state-only data. The resulting synthesis problem is addressed via a feasibility-driven active-set synthesis procedure that explicitly accounts for actuator saturation. Moreover, we establish two complementary guarantees ensuring prescribed performance and input constraint satisfaction: a conservative semi-global certificate based on actuator authority, and a local data-driven certificate under sufficient data density. Future work includes reducing the conservativeness of the actuator conditions, and extending the framework to broader system classes and richer controller parameterizations.
\bibliographystyle{IEEEtran}
\bibliography{IEEEabrv,references}

\begin{thebibliography}{10}
\providecommand{\url}[1]{#1}
\csname url@samestyle\endcsname
\providecommand{\newblock}{\relax}
\providecommand{\bibinfo}[2]{#2}
\providecommand{\BIBentrySTDinterwordspacing}{\spaceskip=0pt\relax}
\providecommand{\BIBentryALTinterwordstretchfactor}{4}
\providecommand{\BIBentryALTinterwordspacing}{\spaceskip=\fontdimen2\font plus
\BIBentryALTinterwordstretchfactor\fontdimen3\font minus \fontdimen4\font\relax}
\providecommand{\BIBforeignlanguage}[2]{{%
\expandafter\ifx\csname l@#1\endcsname\relax
\typeout{** WARNING: IEEEtran.bst: No hyphenation pattern has been}%
\typeout{** loaded for the language `#1'. Using the pattern for}%
\typeout{** the default language instead.}%
\else
\language=\csname l@#1\endcsname
\fi
#2}}
\providecommand{\BIBdecl}{\relax}
\BIBdecl

\bibitem{ppc}
C.~P. Bechlioulis and G.~A. Rovithakis, ``Robust adaptive control of feedback linearizable mimo nonlinear systems with prescribed performance,'' \emph{IEEE Transactions on Automatic Control}, vol.~53, no.~9, pp. 2090--2099, 2008.

\bibitem{bechlaut}
C.~P. Bechlioulis and G.~A. Rovithakis, ``A low-complexity global approximation-free control scheme with prescribed performance for unknown pure feedback systems,'' \emph{Automatica}, vol.~50, no.~4, pp. 1217--1226, 2014.

\bibitem{bergerA}
T.~Berger, H.~H. L\^e, and T.~Reis, ``Funnel control for nonlinear systems with known strict relative degree,'' \emph{Automatica}, vol.~87, pp. 345--357, 2018.

\bibitem{paper1}
P.~K. Mishra and P.~Jagtap, ``Approximation-free prescribed performance control with prescribed input constraints,'' \emph{IEEE Control Systems Letters}, vol.~7, pp. 1261--1266, 2023.

\bibitem{smc}
P.~S. Trakas and C.~P. Bechlioulis, ``Adaptive performance control for input constrained mimo nonlinear systems,'' \emph{IEEE Transactions on Systems, Man, and Cybernetics: Systems}, vol.~54, no.~12, pp. 7733--7745, 2024.

\bibitem{ratesiso}
P.~S. Trakas and C.~P. Bechlioulis, ``Robust adaptive prescribed performance control for unknown nonlinear systems with input amplitude and rate constraints,'' \emph{IEEE Control Systems Letters}, vol.~7, pp. 1801--1806, 2023.

\bibitem{bergerArch}
T.~Berger, ``Input-constrained funnel control of nonlinear systems,'' \emph{IEEE Transactions on Automatic Control}, vol.~69, no.~8, pp. 5368--5382, 2024.

\bibitem{fotiadis}
F.~Fotiadis and G.~A. Rovithakis, ``Input-constrained prescribed performance control for high-order mimo uncertain nonlinear systems via reference modification,'' \emph{IEEE Transactions on Automatic Control}, vol.~69, no.~5, pp. 3301--3308, 2024.

\bibitem{autom}
L.~N. Bikas and G.~A. Rovithakis, ``Prescribed performance under input saturation for uncertain strict-feedback systems: A switching control approach,'' \emph{Automatica}, vol. 165, p. 111663, 2024.

\bibitem{ames}
A.~D. Ames, X.~Xu, J.~W. Grizzle, and P.~Tabuada, ``Control barrier function based quadratic programs for safety critical systems,'' \emph{IEEE Transactions on Automatic Control}, vol.~62, no.~8, pp. 3861--3876, 2017.

\bibitem{lanza}
L.~Lanza, J.~Köhler, D.~Dennstädt, T.~Berger, and K.~Worthmann, ``A model-free approach to control barrier functions using funnel control,'' \emph{IEEE Control Systems Letters}, vol.~9, pp. 1183--1188, 2025.

\bibitem{verginis}
C.~K. Verginis, ``Funnel control for uncertain nonlinear systems via zeroing control barrier functions,'' \emph{IEEE Control Systems Letters}, vol.~7, pp. 853--858, 2023.

\bibitem{namerikawa}
R.~Namerikawa, A.~Wiltz, F.~Mehdifar, T.~Namerikawa, and D.~V. Dimarogonas, ``On the equivalence between prescribed performance control and control barrier functions,'' in \emph{2024 American Control Conference (ACC)}, 2024, pp. 2458--2463.

\bibitem{cbfl}
A.~Robey, H.~Hu, L.~Lindemann, H.~Zhang, D.~V. Dimarogonas, S.~Tu, and N.~Matni, ``Learning control barrier functions from expert demonstrations,'' in \emph{2020 59th IEEE Conference on Decision and Control (CDC)}, 2020, pp. 3717--3724.

\bibitem{jin2020neural}
W.~Jin, Z.~Wang, Z.~Yang, and S.~Mou, ``Neural certificates for safe control policies,'' \emph{arXiv preprint arXiv:2006.08465}, 2020.

\bibitem{dawson2022safe}
C.~Dawson, Z.~Qin, S.~Gao, and C.~Fan, ``Safe nonlinear control using robust neural lyapunov-barrier functions,'' in \emph{Proceedings of the 5th Conference on Robot Learning}, 2022, pp. 1724--1735.

\bibitem{torabi2018behavioral}
F.~Torabi, G.~Warnell, and P.~Stone, ``Behavioral cloning from observation,'' in \emph{Proceedings of the 27th International Joint Conference on Artificial Intelligence}, ser. IJCAI'18.\hskip 1em plus 0.5em minus 0.4em\relax AAAI Press, 2018, p. 4950–4957.

\bibitem{splines}
C.~de~Boor, ``A practical guide to splines,'' in \emph{Applied Mathematical Sciences}, 1978.

\bibitem{ds}
E.~D. Sontag, \emph{Mathematical Control Theory: Deterministic Finite Dimensional Systems (2nd Ed.)}.\hskip 1em plus 0.5em minus 0.4em\relax Berlin, Heidelberg: Springer-Verlag, 1998.

\end{thebibliography}
\end{document}